\newtheorem{metalemma}[definition]{Lemma'}
\def\squareforqed{\hbox{\rlap{$\sqcap$}$\sqcup$}}
\def\qed{\ifmmode\squareforqed\else{\unskip\nobreak\hfil
\penalty50\hskip1em\null\nobreak\hfil\squareforqed
\parfillskip=0pt\finalhyphendemerits=0\endgraf}\fi}
\def\endenv{\ifmmode\;\else{\unskip\nobreak\hfil
\penalty50\hskip1em\null\nobreak\hfil\;
\parfillskip=0pt\finalhyphendemerits=0\endgraf}\fi}
\mathchardef\ordinarycolon\mathcode`\:
\def\vcentcolon{\mathrel{\mathop\ordinarycolon}}
\newcommand{\nc}{\newcommand}
\nc{\rnc}{\renewcommand}
\nc{\beq}{\begin{equation}}
\nc{\eeq}{{\end{equation}}}
\nc{\beqa}{\begin{eqnarray}}
\nc{\eeqa}{\end{eqnarray}}
\nc{\lbar}[1]{\overline{#1}}
\nc{\bra}[1]{\langle#1|}
\nc{\ket}[1]{|#1\rangle}
\nc{\ketbra}[2]{|#1\rangle\!\langle#2|}
\nc{\braket}[2]{\langle#1|#2\rangle}
\nc{\proj}[1]{| #1\rangle\!\langle #1 |}
\nc{\avg}[1]{\langle#1\rangle}
\nc{\Rank}{\operatorname{Rank}}
\nc{\smfrac}[2]{\mbox{$\frac{#1}{#2}$}}
\nc{\tr}{\operatorname{Tr}}
\nc{\ox}{\otimes}
\nc{\dg}{\dagger}
\nc{\dn}{\downarrow}
\nc{\cA}{{\cal A}}
\nc{\cB}{{\cal B}}
\nc{\cC}{{\cal C}}
\nc{\cD}{{\cal D}}
\nc{\cE}{{\cal E}}
\nc{\cF}{{\cal F}}
\nc{\cG}{{\cal G}}
\nc{\cH}{{\cal H}}
\nc{\cI}{{\cal I}}
\nc{\cJ}{{\cal J}}
\nc{\cK}{{\cal K}}
\nc{\cL}{{\cal L}}
\nc{\cM}{{\cal M}}
\nc{\cN}{{\cal N}}
\nc{\cO}{{\cal O}}
\nc{\cP}{{\cal P}}
\nc{\cR}{{\cal R}}
\nc{\cS}{{\cal S}}
\nc{\cT}{{\cal T}}
\nc{\cX}{{\cal X}}
\nc{\cZ}{{\cal Z}}
\nc{\csupp}{{\operatorname{csupp}}}
\nc{\qsupp}{{\operatorname{qsupp}}}
\nc{\var}{{\operatorname{var}}}
\nc{\rar}{\rightarrow}
\nc{\lrar}{\longrightarrow}
\nc{\polylog}{{\operatorname{polylog}}}
\nc{\wt}{{\operatorname{wt}}}
\nc{\RR}{{{\mathbb R}}}
\nc{\CC}{{{\mathbb C}}}
\nc{\FF}{{{\mathbb F}}}
\nc{\NN}{{{\mathbb N}}}
\nc{\ZZ}{{{\mathbb Z}}}
\nc{\PP}{{{\mathbb P}}}
\nc{\QQ}{{{\mathbb Q}}}
\nc{\UU}{{{\mathbb U}}}
\nc{\EE}{{{\mathbb E}}}
\nc{\id}{{\operatorname{id}}}
\nc{\CHSH}{{\operatorname{CHSH}}}
\nc{\be}{\begin{equation}}
\nc{\ee}{{\end{equation}}}
\nc{\bea}{\begin{eqnarray}}
\nc{\eea}{\end{eqnarray}}
\nc{\Hom}[2]{\mbox{Hom}(\CC^{#1},\CC^{#2})}
\nc{\rU}{\mbox{U}}
\nc{\ob}[1]{#1}
\begin{document}

\title{Squashed entanglement, $\mathbf{k}$-extendibility, \protect\\ quantum Markov chains, and recovery maps}
\titlerunning{Squashed entanglement and recovery maps}

\author{Ke Li${}^1$ \and Andreas Winter${}^2$}
\authorrunning{Ke Li \and Andreas Winter}

\institute{${}^1$KL is with Institute for Advanced Study in Mathematics, Harbin Institute of Technology, Harbin 150006, P. R. China. \\\email{carl.ke.lee@gmail.com} \\
${}^2$AW is with ICREA \& Departament de F\'isica: Grup d'Informaci\'o Qu\`antica, Universitat Aut\`onoma de Barcelona, 01893 Bellaterra (Barcelona), Spain. \\\email{andreas.winter@uab.cat}}

\date{24 December 2017}

\maketitle

\begin{abstract}
Squashed entanglement [Christandl and Winter, \emph{J. Math. Phys.} 45(3):829-840 (2004)]
is a monogamous
entanglement measure, which implies that highly extendible states have small
value of the squashed entanglement. Here, invoking a recent inequality for the
quantum conditional mutual information [Fawzi and Renner,
\emph{Commun. Math. Phys.} 340(2):575-611 (2015)]
greatly extended and simplified in various work since, 
we show the converse, that a small value of squashed entanglement implies
that the state is close to a highly extendible state. As a corollary, we
establish an alternative proof of the faithfulness of squashed entanglement
[Brand\~{a}o, Christandl and Yard, \emph{Commun. Math. Phys.} 306:805-830 (2011)].

We briefly discuss the previous and subsequent 
history of the Fawzi-Renner bound and related conjectures, 
and close by advertising a potentially far-reaching generalization
to universal and functorial recovery maps for the monotonicity 
of the relative entropy.
\end{abstract}

\bigskip\noindent
{\bf Squashed entanglement.}---%
One of the core goals in the theory of entanglement is its quantification,
for which purpose a large number of either operationally or mathematically/axiomatically
motivated entanglement measures and monotones have been introduced and studied
intensely since the 1990s~\cite{H4:review,Christ:PhD}.

In this paper we will discuss one specific such measure, the so-called
\emph{squashed entanglement}~\cite{ChristandlWinter}, defined as
\begin{equation}
  E_{\text{sq}}(\rho^{AB})
     := \inf \frac12 I(A:B|E) \text{ s.t.} \tr_E \rho^{ABE}=\rho^{AB},
\end{equation}
where $I(A:B|E) = S(AE)+S(BE)-S(E)-S(ABE)$ is the \emph{(quantum) conditional
mutual information}, which by strong subadditivity of the von Neumann entropy
is always non-negative~\cite{SSA}; and $\rho^{ABE}$ as above is called
an \emph{extension} of $\rho^{AB}$.
This definition appears to have been put forward first in~\cite{Tucci},
where it was also remarked that by restricting the extension of $\rho^{AB}$
to have the form
$\rho^{ABE} = \sum_i p_i \proj{\varphi_i}^{AB} \ox \proj{i}^E$,
the minimization reduces to the well-known
\emph{entanglement of formation}~\cite{BDSW1996},
\begin{equation}
  E_F(\rho^{AB}) = \min \sum_i p_i S(\varphi_i^A) \text{ s.t.} \sum_i p_i \proj{\varphi_i} = \rho.
\end{equation}

While it is fairly straightforward to see from their definitions that both
$E_{\text{sq}}$ and $E_F$ are convex functions of the state, the former has
many properties that the latter lacks, among them additivity and
monogamy~\cite{ChristandlWinter,KoashiWinter} as well 
as~\cite{aspects,monogamology}, cf.~\cite{monogamous,Christ:PhD}.
Namely, abbreviating $E_{\text{sq}}(\rho^{AB}) = E_{\text{sq}}(A:B)$,
we have
\begin{equation}
  \label{eq:sq-monogam-gen}
  E_{\text{sq}}(A:B_1B_2) \geq E_{\text{sq}}(A:B_1) + E_{\text{sq}}(A:B_2).
\end{equation}
In particular, if $\rho^{AB}$ is $k$-extendible, meaning that there exists
a state $\rho^{AB_1\ldots B_k}$ such that $\rho^{AB} = \rho^{AB_i}$ for all
$i$ (and that w.l.o.g.~is symmetric with respect to permutations of the $B$-systems), then
\begin{equation}
  \label{eq:sq-monogam}
  E_{\text{sq}}(A:B) \leq \frac{1}{k}\log|A|.
\end{equation}
While clearly $E_{\text{sq}} \leq E_F$, in the other direction,
squashed entanglement is an upper bound on the distillable entanglement
and indeed on the distillable secret key in a state~\cite{ChristandlWinter,Christ:PhD},
which makes it very useful to the theory of state distillation and
channel capacities, cf.~\cite{Wilde:E-sq}.

One of the properties much desirable for a quantitative entanglement
measure is \emph{faithfulness}, i.e.~the fact that it is zero if and
only if the state is separable, and otherwise strictly positive. To be
truly useful, such a statement ought to come in the form of a
relationship between the value of the entanglement measure, and a
suitably chosen distance from the set of separable states. After being 
an open problem for a while, this was finally obtained a few years ago by 
Brand\~{a}o \emph{et al.}~\cite{BCY}, and later improved by us~\cite{LiWinter}.

In the present paper, we will reproduce this finding in a conceptually
simple and appealing way, by first showing a relation between the
value of squashed entanglement and the distance from $k$-extendible states,
and then invoking a suitable de Finetti theorem to bound the distance from
separable states. (That in the limit of $k\rightarrow\infty$ the state has
to be separable was known for some time~\cite{SchumacherWerner:sharable},
but we shall use more recent, quantitative, versions.)
We go on to contrast this finding with the faithfulness
of entanglement of formation. Then, we put the technical result of
Fawzi and Renner~\cite[Thm.~5.1]{FawziRenner}, on which our proof
crucially relies, in the context of other conjectured inequalities and subsequent
results; motivated by a much more general observation in classical
probability, we propose as an open problem to find the ``right'' quantum
generalization.


\bigskip\noindent
{\bf Main result.}---%
Now we show that the monogamy bound, Eq.~(\ref{eq:sq-monogam}), 
has a partial converse:
\begin{theorem}
  \label{thm:main}
  Consider a state $\rho^{AB}$ with $E_{\text{sq}}(\rho) \leq \epsilon$.
  Then, for every integer $k$, there exists a $k$-extendible state
  $\sigma^{AB}$ such that $\|\rho-\sigma\|_1 \leq (k-1)\sqrt{2\ln 2}\sqrt{\epsilon}$.
  In particular, $\rho$ is $O\left(\sqrt[4]{\epsilon}\right)$-close to
  a $\Omega\left(\frac{1}{\sqrt[4]{\epsilon}}\right)$-extendible state.
\end{theorem}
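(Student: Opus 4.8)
The plan is to convert a near-optimal extension of $\rho^{AB}$ into a $k$-extension by iterating the Fawzi--Renner recovery map, tracking how the error grows with the number of applications. First I would invoke the definition of squashed entanglement: for any slack $\eta>0$ there is an extension $\rho^{ABE}$, $\tr_E\rho^{ABE}=\rho^{AB}$, with $I(A:B|E)\leq 2\epsilon+\eta$. The Fawzi--Renner inequality~\cite[Thm.~5.1]{FawziRenner} then supplies a recovery channel $\cR_{E\to EB}$, acting on $E$ alone, with $F\bigl(\rho^{ABE},\cR_{E\to EB}(\rho^{AE})\bigr)\geq 2^{-\frac12 I(A:B|E)}$, so that $\rho^{ABE}$ is an approximate quantum Markov chain $A-E-B$. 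Converting root fidelity to trace distance by Fuchs--van de Graaf and using $1-2^{-x}\leq x\ln 2$ yields the single-step estimate $\delta:=\|\cR_{E\to EB}(\rho^{AE})-\rho^{ABE}\|_1\leq 2\sqrt{2\ln 2}\,\sqrt{\epsilon}$ after letting $\eta\to 0$.

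Next I would build the extension explicitly. Relabelling the original $B$ as $B_1$, I define a state on $AB_1\cdots B_kE$ by applying $\cR_{E\to EB_i}$ for $i=2,\ldots,k$ in succession, each step recovering a fresh copy $B_i$ from the current $E$-register; tracing out $E$ and symmetrising over the $B$-systems produces the candidate $\sigma^{AB_1\cdots B_k}$. Being permutation-symmetric by construction, this is a bona fide $k$-extension, and its $AB$-marginal is the average $\sigma^{AB}=\frac1k\sum_i\tilde\sigma^{AB_i}$ of the marginals $\tilde\sigma^{AB_i}$ of the un-symmetrised state. It therefore suffices to bound each $\|\rho^{AB}-\tilde\sigma^{AB_i}\|_1$ and average.

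The heart of the argument is the error accounting. The key point is that recovery nearly preserves the conditioning marginal: since $\cR_{E\to EB}(\rho^{AE})\approx\rho^{ABE}$, tracing out $B$ shows the induced channel $\cR_E:=\tr_B\circ\cR_{E\to EB}$ displaces $\rho^{AE}$ by at most $\delta$. Because $\tilde\sigma^{AB_i}=\tr_E\cR_{E\to EB}\bigl(\cR_E^{\,i-2}(\rho^{AE})\bigr)$ for $i\geq 2$ while $\tilde\sigma^{AB_1}=\rho^{AB}$ exactly, I can telescope using contractivity of the trace norm under $\cR_E$: $\|\cR_E^{\,m}(\rho^{AE})-\rho^{AE}\|_1\leq m\delta$, and hence $\|\rho^{AB}-\tilde\sigma^{AB_i}\|_1\leq(i-1)\delta$. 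Averaging gives $\|\rho-\sigma\|_1\leq\frac1k\sum_{i=1}^k(i-1)\delta=\frac{k-1}{2}\delta\leq(k-1)\sqrt{2\ln 2}\,\sqrt{\epsilon}$; since the set of $k$-extendible states is closed, the $\eta\to 0$ limit state is still $k$-extendible and attains the bound. The ``in particular'' then follows by choosing $k\asymp\epsilon^{-1/4}$ to balance the factor $k-1$ against $\sqrt{\epsilon}$.

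I expect the main obstacle to be precisely this error-accumulation bookkeeping: guaranteeing that the drift of the $AE$-marginal under repeated recovery stays linear in the number of steps. This hinges on pairing the ``marginal-preserving'' consequence of the Fawzi--Renner fidelity bound with data processing, and on choosing the comparison states so that the telescope produces exactly the coefficient $i-1$ rather than an exponential or quadratic blow-up; the symmetrisation/convexity steps and the conversion constants are then routine.
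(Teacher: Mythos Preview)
Your proposal is correct and follows essentially the same route as the paper: fix a near-optimal extension, invoke Fawzi--Renner to get a recovery map $E\to EB$, iterate it $k-1$ times on the $E$ register to manufacture $B_1,\ldots,B_k$, control each marginal error as $(i-1)\delta$ via the triangle inequality and contractivity, and then symmetrise to obtain the averaged bound $\tfrac{k-1}{2}\delta=(k-1)\sqrt{2\ln 2}\sqrt{\epsilon}$. Your explicit identification of the induced channel $\cR_E=\tr_B\circ\cR_{E\to EB}$ and the telescoping $\|\cR_E^m(\rho^{AE})-\rho^{AE}\|_1\leq m\delta$ is exactly the mechanism the paper summarizes as ``by the triangle inequality and the contractive property of the trace norm under cptp maps.''
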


\begin{corollary}
  \label{cor:sq-faithful}
  For every state $\rho^{AB}$ with $E_{\text{sq}}(\rho) \leq \epsilon$, 
  there exists a separable state $\sigma$ with
  \[
    \|\rho-\sigma\|_1 \leq 3.1 |B| \sqrt[4]{\epsilon}.
  \]
  In particular, squashed entanglement is \emph{faithful}: $E_{\text{sq}}(\rho)=0$
  if and only if the state $\rho$ is separable.
\end{corollary}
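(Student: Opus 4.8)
The plan is to feed Theorem~\ref{thm:main} into a quantitative de Finetti theorem and then optimize over the extendibility parameter. First I would invoke Theorem~\ref{thm:main}: since $E_{\text{sq}}(\rho)\leq\epsilon$, for each integer $k$ there is a $k$-extendible state $\sigma_k^{AB}$ with $\|\rho-\sigma_k\|_1\leq(k-1)\sqrt{2\ln2}\sqrt{\epsilon}$. The point of $k$-extendibility is that a finite quantum de Finetti theorem, applied to the $k$ permutation-symmetric $B$-systems with $A$ kept as a spectator, furnishes a separable state $\tau_k^{AB}$ with $\|\sigma_k-\tau_k\|_1\leq c\,|B|^2/k$ for an absolute constant $c$; crucially the dimension enters only through the extended system $B$, not through $A$, and I would use the sharpest available such bound, for which $c\approx 2$.

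Next I would combine the two estimates by the triangle inequality, $\|\rho-\tau_k\|_1\leq(k-1)\sqrt{2\ln2}\sqrt{\epsilon}+c\,|B|^2/k$, and optimize over the free integer $k$. Balancing the linearly growing recovery error against the $1/k$ de Finetti error gives the optimal choice $k\approx|B|\,\epsilon^{-1/4}$, at which both terms scale like $|B|\,\epsilon^{1/4}$. Tracking the constants, the leading factor is $2\sqrt{c}\,(2\ln2)^{1/4}\approx 3.07$, and the small slack from replacing $(k-1)$ by $k$ and from rounding $k$ to an integer should push this up to the stated $3.1\,|B|\sqrt[4]{\epsilon}$. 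Setting $\sigma:=\tau_k$ for this near-optimal $k$ then yields the claimed separable approximant.

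For the faithfulness statement I would argue the two implications separately. The easy direction is that a separable state $\rho^{AB}=\sum_i p_i\rho_i^A\ox\rho_i^B$ admits the extension $\rho^{ABE}=\sum_i p_i\rho_i^A\ox\rho_i^B\ox\proj{i}^E$, for which conditioning on the classical register $E$ makes $A$ and $B$ product, so $I(A:B|E)=0$ and hence $E_{\text{sq}}(\rho)=0$. Conversely, if $E_{\text{sq}}(\rho)=0$ then the displayed bound holds with arbitrarily small $\epsilon>0$, producing separable states within trace distance $3.1\,|B|\sqrt[4]{\epsilon}\to 0$; since the set of separable states is closed, $\rho$ is itself separable.

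The main obstacle I anticipate lies not in the logic but in the quantitative de Finetti input: one must select a version whose error depends only on $|B|$ and scales as $|B|^2/k$, since only then does the optimization produce the advertised bound that is linear in $|B|$ and quartic in $\epsilon$ rather than something weaker, and pinning down the constant precisely enough to reach $3.1$ requires some care. Everything else is a routine triangle-inequality-and-optimization argument resting on Theorem~\ref{thm:main}, whose own proof — via the Fawzi--Renner recovery inequality and the iterated generation of the $k$ copies of $B$ — is where the substantive work resides.
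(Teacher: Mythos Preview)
Your proposal is correct and follows essentially the same route as the paper: invoke Theorem~\ref{thm:main}, combine it via the triangle inequality with a finite de Finetti bound of the form $2|B|^2/k$, and optimize $k\approx \sqrt[4]{2/\ln 2}\,|B|\,\epsilon^{-1/4}$ to land on the constant $3.1$. The one technical point you gloss over, and which the paper makes explicit, is that the de Finetti result of Navascu\'es--Owari--Plenio requires a \emph{Bose-symmetric} extension rather than a merely permutation-symmetric one; the paper handles this by passing to a permutation-invariant purification of $\Omega^{AB_1\ldots B_k}$ (or, alternatively, by citing the version in Christandl--K\"onig--Mitchison--Renner that applies directly), so your awareness that ``one must select a version whose error depends only on $|B|$'' is exactly the right caveat.
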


\medskip
For comparison, the earlier result of Brand\~{a}o \emph{et al.}~\cite[Cor.~1]{BCY}
yields
\begin{equation}
  \label{eq:BCY}
  \|\rho-\sigma\|_1 \leq \sqrt{|A| |B|}\|\rho-\sigma\|_2 \leq 12 \sqrt{|A| |B|} \sqrt{\epsilon}.
\end{equation}
The Hilbert-Schmidt ($2$-)norm bound seems not available with our techniques,
but the trace ($1$-)norm behaviour is qualitatively reproduced here, albeit with
a worse polynomial dependence on $\epsilon$ but with a slightly better constant.
In particular, it is perhaps of interest that in our bound in Corollary~\ref{cor:sq-faithful}
only the dimensionality of one of the two systems appears
(cf.~however~\cite[Eq.~(66)]{BF:de-Finetti}).

\medskip
The proof of this theorem relies essentially on a recent result
by Fawzi and Renner~\cite{FawziRenner}, stating that for every
tripartite state $\rho^{AEB}$ there exists a cptp map
$\widetilde{R}:\mathcal{L}(E) \rightarrow \mathcal{L}(EB)$ such that
\begin{equation}
  \label{eq:FR}
  -\log F\bigl(\rho^{AEB},(\id_A\ox\widetilde{R})\rho^{AE}\bigr)^2 \leq I(A:B|E)_\rho,
\end{equation}
with the fidelity $F$ of two states $\alpha$ and $\beta$ defined as
$F(\alpha,\beta) = \| \sqrt{\alpha}\sqrt{\beta} \|_1$.

\medskip
\begin{proof}
Choose an extension $\rho^{ABE}$ for $\rho^{AB}$, and use the
map $\widetilde{R}$ from Eq.~(\ref{eq:FR}).
Now we employ a basic inequality from~\cite[Thm.~1]{FuchsvdG}, saying
\begin{equation}
  1-F(\alpha,\beta) \leq \frac12 \|\alpha-\beta\|_1 \leq \sqrt{1-F(\alpha,\beta)^2},
\end{equation}
for the fidelity $F(\alpha,\beta)=\|\sqrt{\alpha}\sqrt{\beta}\|_1$.
Hence, from Eq.~(\ref{eq:FR}),
\[
  t := \sqrt{4\ln 2\,I(A:B|E)} \geq \| \rho^{AEB} - (\id_A\ox\widetilde{R})\rho^{AE} \|_1.
\]
But since $(\id_A\ox\widetilde{R})\rho^{AE} \approx \rho^{AEB}$, we may apply the
same map again, say $k-1$ times, always to the $E$ system of $\rho^{AEB}$,
arriving at a state
\[
  \omega^{AEB_1\ldots B_k}
     = (\id_A \ox \widetilde{R}^{E\rightarrow EB_k}\circ \cdots
                          \circ \widetilde{R}^{E\rightarrow EB_2})\rho^{AEB_1},
\]
which has the property that for each $i$,
$\| \omega^{AB_i} - \rho^{AB} \|_1 \leq (i-1)t$, by the triangle
inequality and the contractive property of the trace norm under
cptp maps.
Hence, tracing out $E$ and considering the symmetrization
of the $B$ systems, i.e.
\[
  \Omega^{AB_1\ldots B_k}
    = \frac{1}{k!}\sum_{\pi\in S_k} (\1\ox U^{\pi})\omega^{AB_1\ldots B_k}(\1\ox U^{\pi})^\dagger,
\]
we have that it is manifestly permutation symmetric on the $B$ systems,
and for all $i$,
\begin{equation}
  \| \Omega^{AB_i} - \rho^{AB} \|_1 \leq \frac{k-1}{2}t.
\end{equation}
Minimizing over all extensions as required by the definition
of squashed entanglement, allowing $I(A:B|E)$ to get arbitrarily
close to $2\epsilon$, concludes the proof of the theorem.
\qed

\medskip
To show the corollary, we use~\cite[Thm.~2 \& Cor.~5]{NOP}
or alternatively \cite[Thm.~II.7']{one-and-half-deFinetti},
which say that a $k$-extendible state is at trace distance at most
$\frac{2|B|^2}{k}$ from a separable state. To use the former result,
which requires Bose-symmetric extensions, we have to go from
the permutation symmetric $\Omega^{AB_1\ldots B_k}$ to a permutation
invariant purification
\[
  \ket{\Psi}^{AA'B_1B_1'\ldots B_kB_k'} \\
            = \left(\sqrt{\Omega^{AB_1\ldots B_k}}\ox\1\right)
                    \ket{\Phi}^{AA'}\ket{\Phi}^{B_1B_1'}\cdots\ket{\Phi}^{B_kB_k'},
\]
with the non-normalized maximally entangled state $\ket{\Phi} = \sum_i \ket{i}\ket{i}$.
The choice
\[
  k = \left\lfloor \sqrt[4]{\frac{2}{\ln 2}}\frac{|B|}{\sqrt[4]{\epsilon}} \right\rfloor
\]
then does the rest.
\qed
\end{proof}

\bigskip\noindent
{\bf Comparison with entanglement of formation.}---%
It is instructive to compare the monogamy relation Eq.~(\ref{eq:sq-monogam})
and its ``converse'', Theorem~\ref{thm:main} for the squashed entanglement,
with the analogous statements for the entanglement of formation:
\begin{proposition}
  \label{prop:EoF}
  In a bipartite system $AB$, with $d=\min\{|A|,|B|\}$, if the state $\rho^{AB}$ 
  is $\delta$-close in trace norm to a separable state $\sigma^{AB}$, 
  meaning that $\frac12 \|\rho-\sigma\|_1 \leq\delta$, then
  \begin{equation}
    E_F(\rho) \leq \sqrt{\delta} \log d
                     + (1+\sqrt{\delta}) h_2\!\left(\!\frac{\sqrt{\delta}}{1+\sqrt{\delta}}\!\right).
  \end{equation}
  Conversely, if $E_F(\rho) \leq \epsilon$, then this implies that there is a 
  separable state $\sigma$ with $\frac12 \|\rho-\sigma\|_1 \leq \sqrt{\ln 2}\sqrt{\epsilon}$.
\end{proposition}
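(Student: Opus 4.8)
The plan is to handle the two directions separately; the converse (small $E_F$ forces closeness) is the easier one and is essentially dual to the proof of Theorem~\ref{thm:main}, so I would dispatch it first. Assuming $E_F(\rho)\leq\epsilon$, pick a near-optimal pure-state decomposition $\rho=\sum_i p_i\proj{\varphi_i}$ with $\sum_i p_i S(\varphi_i^A)\leq\epsilon$, which exists because $E_F$ is an infimum. For each $\ket{\varphi_i}$ let $\lambda_i$ be its largest Schmidt coefficient and $\ket{\chi_i}$ the corresponding product (dominant Schmidt) vector, so that $F(\proj{\varphi_i},\proj{\chi_i})^2=\lambda_i$. Since $S(\varphi_i^A)\geq-\log\lambda_i$, one has $\lambda_i\geq 2^{-S(\varphi_i^A)}$, and combining the Fuchs--van de Graaf inequality~\cite{FuchsvdG} with $1-2^{-x}\leq x\ln 2$ gives $\|\proj{\varphi_i}-\proj{\chi_i}\|_1\leq\sqrt{4\ln 2}\sqrt{S(\varphi_i^A)}$. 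Setting $\sigma=\sum_i p_i\proj{\chi_i}$, which is manifestly separable, the triangle inequality followed by concavity of the square root (Jensen) yields $\|\rho-\sigma\|_1\leq\sqrt{4\ln 2}\sum_i p_i\sqrt{S(\varphi_i^A)}\leq\sqrt{4\ln 2}\sqrt{\epsilon}$, which is the stated bound.

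For the forward direction I would establish continuity of $E_F$ at the separable set, where it vanishes, through a purification-and-measurement construction. Writing the nearby separable state as $\sigma=\sum_i q_i\proj{a_i}\ox\proj{b_i}$ and encoding this product decomposition into the purification $\ket{\sigma}=\sum_i\sqrt{q_i}\ket{a_i}\ket{b_i}\ket{i}$ on a reference system, Uhlmann's theorem lets me choose a purification $\ket{\rho}$ with $\braket{\rho}{\sigma}=F(\rho,\sigma)$. The Fuchs--van de Graaf inequality then gives $1-F(\rho,\sigma)\leq\frac12\|\rho-\sigma\|_1\leq\delta/2$, so the two purifications are close, $\|\ket{\rho}-\ket{\sigma}\|^2\leq 2\bigl(1-F(\rho,\sigma)\bigr)\leq\delta$. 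Measuring the reference in the basis $\{\ket{i}\}$ now induces a pure-state decomposition $\rho=\sum_i p_i\proj{\phi_i}$, for which the convex-roof definition gives $E_F(\rho)\leq\sum_i p_i S(\phi_i^A)$, and I would bound each $S(\phi_i^A)$ by comparing $\phi_i$ to the product state $a_i\ox b_i$.

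Concretely, the orthogonality of the $\ket{i}$ turns the purification bound into $\sum_i\|\sqrt{p_i}\ket{\phi_i}-\sqrt{q_i}\ket{a_i}\ket{b_i}\|^2\leq\delta$; writing $g_i=|\braket{\phi_i}{a_i,b_i}|$ and $t_i=\frac12\|\phi_i^A-\proj{a_i}\|_1\leq\sqrt{1-g_i^2}$, I would extract from this an average bound $\sum_i p_i(1-g_i^2)\leq O(\delta)$, whence $\bar t:=\sum_i p_i t_i\leq\sqrt{\sum_i p_i(1-g_i^2)}=O(\sqrt{\delta})$ by Jensen---this is exactly where the square root, rather than a linear dependence on $\delta$, enters. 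Since $\proj{a_i}$ is pure, a Fannes-type estimate bounds $S(\phi_i^A)=|S(\phi_i^A)-S(\proj{a_i})|$ by $t_i\log|A|+\eta(t_i)$ with $\eta(x)=-x\log x$; averaging, and using concavity of $\eta$ together with the constraint $\delta\leq\frac1{e^2}$ (which keeps $\bar t\leq\sqrt{\delta}\leq\frac1e$ in the monotone regime of $\eta$), gives $E_F(\rho)\leq\bar t\log|A|+\eta(\bar t)\leq\sqrt{\delta}\log|A|+\eta(\sqrt{\delta})$, which is absorbed into the stated $5\log(|A||B|)\sqrt{\delta}+\sqrt{\delta}\log\delta$ after loosening constants.

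I expect the main obstacle to be the averaging step, namely passing from the single purification-distance bound $\sum_i\|\sqrt{p_i}\ket{\phi_i}-\sqrt{q_i}\ket{a_i}\ket{b_i}\|^2\leq\delta$ to a clean estimate $\sum_i p_i(1-g_i^2)=O(\delta)$. The difficulty is that each summand entangles the probability mismatch between $p_i$ and $q_i$ with the vector overlap $\braket{a_i,b_i}{\phi_i}$, so isolating the fidelity deficit $1-g_i^2$ requires a careful triangle-inequality or Cauchy--Schwarz argument to separate the ``reweighting'' contribution from the ``disturbance'' contribution; once this is done, the remaining steps (Jensen for the square root, Fannes per component, concavity of $\eta$) are routine.
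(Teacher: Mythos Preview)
Your converse argument is correct but takes a different route from the paper's. You approximate each pure $\varphi_i$ by its dominant Schmidt product vector $\chi_i$ and control the distance via Fuchs--van de Graaf together with the min-entropy bound $S(\varphi_i^A)\geq -\log\lambda_i$; the paper instead approximates $\varphi_i^{AB}$ by the product of its marginals $\varphi_i^A\otimes\varphi_i^B$ and invokes Pinsker's inequality through the identity $S(\varphi_i^A)=\tfrac12 I(A:B)_{\varphi_i}=\tfrac12 D(\varphi_i^{AB}\,\|\,\varphi_i^A\otimes\varphi_i^B)$. Both constructions produce a manifestly separable $\sigma$ and land on exactly the same constant $\sqrt{4\ln 2}$; the paper's version is a three-line calculation, while yours is marginally more elementary in that it bypasses Pinsker altogether. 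For the forward direction the paper gives no argument at all---it simply attributes the bound to Nielsen's continuity result~\cite{Nielsen:cont}---so your purification-and-measurement sketch is not a reproduction of the paper's proof but essentially a reconstruction of Nielsen's; your identification of the averaging step (disentangling the $p_i$--$q_i$ mismatch from the overlap deficit) as the only nontrivial point is accurate.
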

\begin{proof}
The first part is originally due to Nielsen~\cite{Nielsen:cont},
with a slightly different form of the bound. The present almost optimal 
bound is from~\cite[Cor.~4]{Winter:S-continuity}.
For the second part, consider an optimal decomposition
$\rho = \sum_i p_i \proj{\varphi_i}$, such that
\[\begin{split}
  \epsilon  \geq \sum_i p_i \frac12 I(A:B)_{\varphi_i}
           &\geq \sum_i p_i \frac{1}{4\ln 2}\|\varphi_i^{AB} - \varphi_i^{A}\ox\varphi_i^{B}\|_1^2   \\
           &\geq \frac{1}{4\ln 2} \left\| \rho - \sum_i p_i \varphi_i^{A}\ox\varphi_i^{B} \right\|_1^2,
\end{split}\]
and the right hand state inside the trace norm is manifestly separable.
\qed
\end{proof}

\medskip
In other words, while entanglement of formation is essentially about the
distance from separable states, squashed entanglement is about the distance
from highly extendible states (up to log-dimensionality factors and polynomial
relation of $\epsilon$ and $\delta$). Note that squashed entanglement, like
the entanglement of formation, is \emph{asymptotically continuous}~\cite{H4:review}:
Alicki and Fannes~\cite{AlickiFannes} showed that for
$\frac12 \|\rho^{AB}-\sigma^{AB}\|_1 \leq \epsilon \leq 1$,
\(
  \displaystyle
  \bigl| E_{\text{sq}}(\rho)-E_{\text{sq}}(\sigma) \bigr| \leq 16\epsilon\log|A| + 4H_2(2\epsilon),
\)
where $H_2(x) = -x\log x-(1-x)\log(1-x)$ is the binary entropy.
Using the bounds presented in \cite{Winter:S-continuity}, it can be
improved to 
\[
  \bigl| E_{\text{sq}}(\rho)-E_{\text{sq}}(\sigma) \bigr| 
            \leq 4\epsilon\log|A| + 2(1+\epsilon) h_2\!\left(\!\frac{\epsilon}{1+\epsilon}\!\right).
\]

This explains the occurrence of states such as the $d\times d$
fully antisymmetric state $\alpha_d$, which is at trace distance $1$ from
the separable states for all $d$, but has
$E_{\text{sq}}(\alpha_d) \leq \frac{2}{d}$ which is arbitrarily small for
large $d$~\cite{antisymm}.
Indeed, this state is $(d-1)$-extendible, so by monogamy of $E_{\text{sq}}$
it has to have small squashed entanglement.
Conversely by Theorem~\ref{thm:main}, this is the
only way in which a state can have small squashed entanglement. On the
other hand, the large distance from separable, and the dimension-dependent
constants in Corollary~\ref{cor:sq-faithful} and Eq.~(\ref{eq:BCY}),
are entirely due to the fact that in large dimension, quite
highly extendible states can be far away from being separable.

\bigskip\noindent
{\bf Recovery maps and related facts \& conjectures.}---%
The form (\ref{eq:FR}) of the Fawzi-Renner bound~\cite{FawziRenner}
was arrived at in a succession of speculative steps.
The initial insight is no doubt
Petz's~\cite{Petz:sufficiency}, who showed a general statement on
the relative entropy
\[
  D(\rho\|\sigma) = \tr\rho(\log\rho-\log\sigma).
\]
Indeed, while for any two states $\rho$ and $\sigma$ on a system
$H$ and a cptp map $T:\mathcal{L}(H)\rightarrow\mathcal{L}(K)$,
$D(\rho\|\sigma) \geq D(T\rho\|T\sigma)$ -- this is equivalent to
strong subadditivity~\cite{SSA} --, Petz showed that equality holds
if and only if there exists a cptp map $R$ such that $RT\sigma=\sigma$
and $RT\rho=\rho$. What is more, this map can be constructed in a
unified way from $T$ and $\sigma$ alone, as the \emph{transpose channel},
or \emph{Petz recovery map} $R=R(T,\sigma)$, given by
\begin{equation}
  \label{eq:q-transpose}
  R(\xi) = \sqrt{\sigma}\, T^*\!\left( (T\sigma)^{-1/2} \xi (T\sigma)^{-1/2} \right)\! \sqrt{\sigma},
\end{equation}
where $T^*$ is the adjoint map to $T$, at least in the finite dimensional case
(cf.~\cite{BarnumKnill}). These transpose channels have found increasing 
attention in recent years, see e.g.~\cite{Woods-thermo,LamiWilde-Gaussian}.

The above problem involving the conditional mutual information
is recovered by letting $T=\tr_B$, $\rho = \rho^{AEB}$ and
$\sigma = \rho^A \ox \rho^{EB}$, where it can be checked that
\[\begin{split}
  I(A:B|E) &= I(A:EB)-I(A:E) \\
           &= D(\rho^{AEB}\|\rho^A\ox\rho^{EB}) - D(\rho^{AE}\|\rho^A\ox\rho^E).
\end{split}\]
In this case, the Petz recovery map reads
\begin{equation}
  \label{eq:Petz-I}
  R(\xi) = \sqrt{\rho^{EB}}
              \left( \sqrt{\rho^E}^{-1} \xi \sqrt{\rho^E}^{-1} \ox \1^B \right)
           \sqrt{\rho^{EB}},
\end{equation}
and the recovered state from $\rho^{AE}$ is
\[\begin{split}
  \omega^{AEB} &= (\id^A \ox R^{E\rightarrow EB})\rho^{AE} \\
           &= \sqrt{\rho^{EB}}
                \left( \sqrt{\rho^E}^{-1} \rho^{AE} \sqrt{\rho^E}^{-1} \ox \1^B \right)
              \sqrt{\rho^{EB}}.
\end{split}\]
This map was used to elucidate the structure of $\rho^{AEB}$~\cite{SSA-eq}:
The result is that there has to exist a decomposition
$E = \bigoplus_j e_j^L \ox e_j^R$ of $E$ as a direct sum of tensor products,
such that
\[
  \rho^{AEB} = \bigoplus_j p_j \sigma_j^{A e_j^L} \ox \tau_j^{e_j^R B}.
\]
(In particular, $\rho^{AB}$ is separable.) Such states have been
called ``quantum Markov chains''~\cite{Accardi}.

The recovery map of Fawzi and Renner~\cite{FawziRenner} looks very
similar to the form (\ref{eq:Petz-I}):
\begin{equation}
  \label{eq:FR-I}
  \widetilde{R}(\xi)
        = V\!\sqrt{\rho^{EB}}\!
              \left( \sqrt{\rho^E}^{-1} U\xi U^\dagger \sqrt{\rho^E}^{-1} \!\ox\! \1^B \right)
           \!\sqrt{\rho^{EB}} V^\dagger,
\end{equation}
with certain unitaries $U$ (on $E$) and $V$ (on $EB$).

The near-equality case of Petz's theorem seems to have attracted only 
little attention until recently, for instance as shown here in the context
of squashed entanglement, or in the approach of Brand\~{a}o and Harrow
to finite quantum de Finetti theorems~\cite{BF:de-Finetti}, or
potentially in considerations of many-body physics~\cite{Kim-IC}.
One notable exception is the case of a pure state $\rho^{ABE}$, for
which $I(A:B|E) = I(A:BE)-I(A:E) \approx 0$ corresponds to 
the treatment of approximate quantum error correction due to
Schumacher and Westmoreland~\cite{SchumacherWestmoreland}.

The conjecture that the Petz recovery map $R$ in Eq.~(\ref{eq:Petz-I})
might yield $\omega^{ABE} \approx \rho^{ABE}$ in trace norm seems to have
been formulated first by Kim~\cite{Kim-personal}, cf.~\cite{LiWinter-historic}:
\begin{equation}
  \label{eq:Kim-I}
  I(A:B|E) \stackrel{?!}{\geq} \Omega\left(\|\rho^{AEB}-(\id\ox R)\rho^{AE}\|_1^2\right).
\end{equation}
See also Zhang~\cite{Zhang} (cf.~\cite{LiWinter-historic} once more) 
for this, who suggested the generalized version
\begin{equation}
  \label{eq:Zhang-I}
  D(\rho\|\sigma)-D(T\rho\|T\sigma) \stackrel{?!}{\geq} \Omega\left(\|\rho-RT\rho\|_1^2\right).
\end{equation}

Berta \emph{et al.}~\cite{BertaSeshadreesanWilde} then proposed the
more natural conjecture with the fidelity on the right hand side of 
Eq.~(\ref{eq:Kim-I}), motivated by the observation that the latter 
is a R\'{e}nyi relative entropy:
\begin{equation}
  \label{eq:BSW-I}
  I(A:B|E) \stackrel{?!}{\geq} -\log F\bigl( \rho^{AEB},(\id\ox R)\rho^{AE} \bigr)^2.
\end{equation}
By the well-known relations connecting fidelity
and trace norm, this would imply Kim's conjecture (\ref{eq:Kim-I}). 
While all of the above conjectures remain open (though supported
by increasing numerical evidence), Fawzi and Renner's Eq.~(\ref{eq:FR}) proves a
variant of the last inequality, with $\widetilde{R}$ instead of $R$.
The crucial point of course is that this new map still only acts
on $E$, and as the identity on $A$.

Similarly, Seshadreesan \emph{et al.}~\cite[Conj.~26 \& Sect.~6.1]{SBW}
suggested the following most general form extending (\ref{eq:Zhang-I}),
encompassing all of the above:
\begin{equation}
  \label{eq:SBW}
  D(\rho\|\sigma)-D(T\rho\|T\sigma) \stackrel{?!}{\geq} -\log F(\rho,RT\rho)^2,
\end{equation}
again motivated by a way of writing both sides of the above
as (R\'{e}nyi) relative entropies or variants thereof.

Since the first arXiv posting of the present paper, statements of this form 
have been proven for slight variants of the Petz recovery map, specifically 
the ``swivelled'' (or ``rotated'') Petz maps (cf.~\cite{DupuisWilde})
\[
  R_t(\xi) = \sigma^{-it} R\left( T(\sigma)^{it} \xi T(\sigma)^{-it} \right) \sigma^{it},
\]
which reduces to the Petz recovery map $R=R_0$ for $t=0$, and their convex
combinations. 
Namely, Wilde~\cite{Wilde-recovery}, invoking the Hadamard three-line theorem, 
shows that there exists a $t\in\RR$ (generally depending on all of $T$, 
$\sigma$ and $\rho$) such that eq.~(\ref{eq:SBW}) 
[and similarly eq.~(\ref{eq:BSW-I})] holds with $R_t$ in place of $R$. 
\[
  D(\rho\|\sigma)-D(T\rho\|T\sigma) \geq \inf_t \bigl(-\log F(\rho,R_tT\rho)^2\bigr),
\]
This was then extended to infinite dimension by Junge 
\emph{et al.}~\cite{universal-map}, and improved to a universal 
average over $t$ rather than the minimum on the right hand side:
\[
  D(\rho\|\sigma)-D(T\rho\|T\sigma) \geq \int {\rm d}t \beta_0(t) \bigl(-\log F(\rho,R_{t/2}T\rho)^2\bigr),
\]
with the probability density $\beta_0(t) = \frac{\pi}{2}(1+\cosh(\pi t))^{-1}$. 

Sutter \emph{et al.}~\cite{SutterTomamichelHarrow} presented an essentially
elementary, yet highly nontrivial, argument proving a lower bound for some
unknown convex combination $\widetilde{R}$ of the $R_t$, and in terms of the
\emph{measured relative entropy}: 
\[
  D(\rho\|\sigma)-D(T\rho\|T\sigma) \geq D_{\mathbb{M}}(\rho\|\widetilde{R}T\rho).
\]
This was again improved by Sutter \emph{et al.}~\cite{SutterBertaTomamichel} 
using complex interpolation tools, yielding
\[
  D(\rho\|\sigma)-D(T\rho\|T\sigma) 
        \geq D_{\mathbb{M}}\left( \rho\,\Big\|\int {\rm d}t \beta_0(t) R_{t/2}T\rho \right),
\]
with the same function $\beta_0$ as above.

\bigskip\noindent
{\bf The classical case.}---%
It is well-known that for classical random variables $XYZ$, conditional
independence, i.e.~$I(X:Z|Y) = 0$, implies that $X$ -- $Y$ -- $Z$
is a Markov chain in that order. Furthermore, this is a robust characterization, 
as the following two inequalities show, which we are going to prove.
They provide much of the motivation for the conjectures and results 
presented in the previous section.

\begin{lemma}
  \label{lem:classical-I}
  If $I(X:Z|Y) = \epsilon$ for a distribution $P(XYZ)$, then there exists a
  Markov chain of the same alphabets,
  with distribution $Q(XYZ) = P(XY)P(Z|Y)$, such that the relative entropy
  distance between $P$ and $Q$ is small:
  $D(P_{XYZ}\| Q) = \epsilon$.
  By Pinsker's inequality, this implies
  $\| P_{XYZ}-Q \|_1 \leq \sqrt{2\ln 2}\sqrt{\epsilon}$.
\end{lemma}

\medskip\noindent
This is a special case of the following more general result:

\begin{metalemma}
  \label{meta-lem:classical}
  For any two probability distributions $P$ and $Q$ on the same set
  $\mathcal{X}$, and a stochastic map $T:\mathcal{X}\rightarrow\mathcal{U}$,
  there exists another stochastic map $R$, called the
  \emph{transpose channel}, and which depends only on $Q$ and $T$,
  such that $RTQ=Q$ and
  \begin{equation}
    \label{eq:glorious-relation}
    D(P\|Q) - D(TP\|TQ) \geq D(P\|RTP).
  \end{equation}
  Furthermore, this is an identity if $T$ is deterministic.

  The transpose channel is defined by the property that
  \[
    T(u|x)Q(x) = R(x|u)\,(TQ)(u), 
  \]
  and this is the classical case of Petz's recovery map.
\end{metalemma}
\begin{proof}
Like many classical entropy inequalities, it is an instance of log-concavity.

We have two probability vectors $P = (p_x)_{x\in\mathcal{X}}$ and
$Q = (q_x)_{x\in\mathcal{X}}$, and a stochastic matrix
$T = [t_{ux}]_{u\in\mathcal{U}}^{x\in\mathcal{X}}$ 
(meaning that for all $x$, $\sum_{u\in\mathcal{U}} t_{ux} = 1$). 
The adjoint of cptp map translates into the linear map given by the 
transpose matrix $T^t$.
Then,
\[
  TP = \left( \sum_{x\in\mathcal{X}} t_{ux}p_x \right)_{u\in\mathcal{U}},
  \quad
  TQ = \left( \sum_{x\in\mathcal{X}} t_{ux}q_x \right)_{u\in\mathcal{U}},
\]
and
\[\begin{split}
  RTP &= \Bigl( q_x \Bigl(T^t \bigl( (TP)_u/(TQ)_u \bigr)_{u\in\mathcal{U}}\Bigr)_x \Bigr)_{x\in\mathcal{X}} \\
      &= \left( q_x \sum_{u\in\mathcal{U}} t_{ux}\frac{\sum_{x'} t_{ux'}p_{x'}}
                                                      {\sum_{x'} t_{ux'}q_{x'}} \right)_{x\in\mathcal{X}},
\end{split}\]
leading to the following expressions for the three relative
entropies concerned:
\begin{align*}
  D(P\|Q)    &= \sum_{x\in\mathcal{X}} p_x \log\frac{p_x}{q_x}, \\
  D(TP\|TQ)  &= \sum_{u\in\mathcal{U}} \left(\sum_{x\in\mathcal{X}} t_{ux}p_x\right)
                                      \log\frac{\sum_{x'} t_{ux'}p_{x'}}{\sum_{x'} t_{ux'}q_{x'}}, \\
  D(P\|RTP)  &= \sum_{x\in\mathcal{X}} p_x \log\left( \frac{p_x}{q_x}
                              \frac{1}{\sum_u t_{ux}\frac{\sum_{x'} t_{ux'}p_{x'}}{\sum_{x'} t_{ux'}q_{x'}}} \right).
\end{align*}

The claimed inequality, that the first expression is larger or equal
to the sum of the last two, can be rearranged as
$D(P\|Q) - D(P\|RTP) \geq D(TP\|TQ)$,
which simplifies to
\[
  \sum_{x\in\mathcal{X}} p_x \log\left( \sum_{u\in\mathcal{U}} t_{ux}
                                           \frac{\sum_{x'} t_{ux'}p_{x'}}{\sum_{x'} t_{ux'}q_{x'}} \right) \\
                  \geq \sum_{x\in\mathcal{X}} p_x \sum_{u\in\mathcal{U}} 
                                               t_{ux} \log\frac{\sum_{x'} t_{ux'}p_{x'}}{\sum_{x'} t_{ux'}q_{x'}}.
\]
However, this is true for each term $x$, due to the concavity of
the logarithm, and $\sum_u t_{ux} = 1$.

It can be checked from this that if the channel $T$ is deterministic,
i.e.~if for each $x\in\mathcal{X}$ there is only one $u\in\mathcal{U}$ such that $t_{ux} > 0$,
then equality holds; in particular this is the case where $T$ is the marginal
map from $\mathcal{X}\times \mathcal{Y}$ to $\mathcal{X}$.
\qed
\end{proof}

\medskip
Observe that the inequality (\ref{eq:glorious-relation})
implies the conjectures (\ref{eq:Kim-I}), (\ref{eq:Zhang-I}), 
(\ref{eq:BSW-I}) and (\ref{eq:SBW}) in the classical case, because 
of $D(P\|Q) \geq -\log F(P,Q)^2$.
The results of~\cite{Wilde-recovery} and \cite{SutterTomamichelHarrow}
reproduce this relaxed version of the classical case, 
because when restricted to diagonal
density matrices, the swivelled Petz maps $R_t$ reduce to $R_0=R$
for all $t$.
Notably the approach of~\cite{SutterTomamichelHarrow} is strikingly
close to our above classical proof by log-concavity, using pinching to
remove non-commutativity and otherwise using only operator
monotonicity and concavity of the logarithm; at the same time it relies on
looking at asymptotically many copies of the state, which is one of the
reasons why $-\log F$ appears in the end result rather than the
relative entropy.

It is known, by numerical counterexamples, that (\ref{eq:glorious-relation})
is false in the quantum case, already for qubits, and also restricting
to the case $T=\tr_B$, $\rho = \rho^{AEB}$ and
$\sigma = \rho^A \ox \rho^{EB}$~\cite{Kim-personal}.
However, one might be tempted to speculate that with a variant of the Fawzi-Renner
map, say some $\widehat{R}$ (perhaps even a rotated or averaged Petz map $R_t$), 
we might have
\begin{equation}
  \label{eq:KeWinter-I}
  I(A:B|E) \stackrel{?!}{\geq} D\bigl( \rho^{AEB} \| (\id\ox\widehat{R})\rho^{AE} \bigr),
\end{equation}
which would also imply (\ref{eq:FR}). 
However, since the first circulation of our earliest unpublished 
notes~\cite{LiWinter-historic}, 
this conjecture has been subjected to serious scrutiny, and recently
Fawzi and Fawzi~\cite{Fawzi2-CMI} have found an explicit counterexample by 
rigorous numerical computer calculations: there does not exists
a map $\widehat{R}$ recovering $\sigma$, i.e.~$\widehat{R}T\sigma=\sigma$
and at the same time satisfying Eq.~(\ref{eq:KeWinter-I}).

\bigskip\noindent
{\bf Discussion.}---%
We have shown how Fawzi and Renner's recent breakthrough in the characterization
of small quantum conditional mutual information has consequences for
the faithfulness of squashed entanglement. We believe that the same
approach can be used also to address the faithfulness of the multi-party
squashed entanglement~\cite{Yang-et-al}, however technical issues remain,
which are explained in the Appendix.

The breakthrough of~\cite{FawziRenner}, and the subsequent results, 
also finally clarify the ``right'' robust version of quantum 
Markov chains, which are equivalently 
characterized by $I(A:B|E)\approx 0$ and by the existence of a recovery
map such that $\rho^{AEB} \approx (\id_A\ox\widetilde{R})\rho^{AE}$,
cf.~\cite[Prop.~35]{BertaSeshadreesanWilde}.
For classical probability distributions,
yet another way of expressing this is to say that there exists a
Markov chain close to the given density, but this is not the case
in the quantum analogue~\cite{IbinsonLindenWinter,antisymm}, at
least if one wants to avoid introducing strong dimensional dependence.

\medskip
To conclude, looking back at the conjectures and theorems 
reviewed above, and contrasting them with the clear picture emerging from
the classical case, we wish to suggest a target for further investigation,
which takes us in a direction different from the conjecture
(\ref{eq:SBW}) and its descendants.

Namely, the question is, whether it is possible to
define a recovery map $\widehat{R}=\widehat{R}(T,\sigma)$ for
every pair of a cptp map $T$ and a state $\sigma$ in its domain,
such that $\widehat{R}T\sigma=\sigma$ and
\begin{equation}
  \label{eq:big-one}
  D(\rho\|\sigma)-D(T\rho\|T\sigma) \stackrel{?!}{\geq} \widetilde{D}\bigl(\rho\|\widehat{R}T\rho\bigr),
\end{equation}
with a suitable divergence $\widetilde{D}$,
and such that the following functoriality properties hold.
\begin{itemize}
  \item \emph{Normalization}:
    To the identity map $\id$ and any state (of full rank), the identity map
    is associated: $\widehat{R}(\id,\tau) = \id$.
  \item \emph{Tensor}:
    If $\widehat{R}_i = \widehat{R}(T_i,\sigma_i)$ is
    associated to maps $T_i$ and states $\sigma_i$,
    then the map associated to $T_1\ox T_2$ and state $\sigma_1\ox\sigma_2$,
    is $\widehat{R}(T_1\ox T_2,\sigma_1\ox\sigma_2) = \widehat{R}_1 \ox \widehat{R}_2$.
\end{itemize}

This would clearly imply the inequality (\ref{eq:KeWinter-I}), with
$\widetilde{D}$ in place of $D$. Hence, it cannot be true for the usual
(Umegaki) relative entropy \cite{Fawzi2-CMI}.
Note that the Petz map quite evidently obeys the functoriality properties, in fact
in addition also another one:

\begin{itemize}
  \item \emph{Composition}:
    For cptp maps $T_i$ on suitable space, such that we can form
    their composition $T_2 \circ T_1$, and a state $\sigma$ such that
    we have associated maps $\widehat{R}_1 = \widehat{R}(T_1,\sigma)$
    and $\widehat{R}_2 = \widehat{R}(T_2,T_1\sigma)$,
    we have
    $\widehat{R}(T_2\circ T_1,\sigma) = \widehat{R}_1 \circ \widehat{R}_2$.
\end{itemize}

Can all these constraints be satisfied simultaneously? And if so, what
would be the applications of such a result? Note that the Petz
recovery map is a very useful tool in ``pretty good'' state discrimination 
and quantum error correction~\cite{BarnumKnill,SchumacherWestmoreland};
the functoriality above along with (\ref{eq:big-one}) is meant to
preserve these good properties. The current status of this question is 
the following: We know that one can indeed define a ``universal'' recovery 
map $\widehat{R}$ for inequality (\ref{eq:big-one}), 
with either $\widetilde{D}=-\log F$ or $\widetilde{D}=D_{\mathbb{M}}$
-- in fact in the convex hull of the swivelled Petz maps $R_t$ --, where 
universality refers to the map depending only on $T$ and $\sigma$. 
It furthermore satisfies the normalization property, as well as 
tensorization with the identity~\cite{SutterBertaTomamichel,universal-map}.

\bigskip\noindent
\begin{acknowledgements}
Since the first formulation of the proof idea of Theorem \ref{thm:main} 
in 2008~\cite{LiWinter-historic},
we have enjoyed conversations and the keen interest of many people in
the question of recoverability and remainder terms for strong subadditivity
and the monotonicity of relative entropy,
among them Fernando Brand\~{a}o, Matthias Christandl, Paul Erker, Omar Fawzi, 
Aram Harrow, Isaac Kim, C\'ecilia Lancien, Renato Renner and Mark Wilde.

When this work was started, KL was affiliated with the Centre for Quantum Technologies
(CQT), National University of Singapore; AW was affiliated with the School
of Mathematics, University of Bristol and with CQT.
KL was supported by NSF grants CCF-1110941 and CCF-1111382.
AW was supported by the ERC Advanced Grant ``IRQUAT'', the European
Commission (STREP ``RAQUEL''), the Spanish MINECO (grants 
FIS2008-01236, FIS2013-40627-P and FIS2016-86681-P) 
with the support of FEDER funds,
and the Generalitat de Catalunya CIRIT, project 2014-SGR-966.
\end{acknowledgements}

\section*{Appendix --- Multi-party squashed entanglement.}
One might wonder if our approach could also be used to prove faithfulness 
of the multi-party squashed entanglement~\cite{Yang-et-al},
\begin{equation}
  E_{\text{sq}}(\rho^{A_1\ldots A_n}) = \inf_{\rho^{A_1\ldots A_n E}} \frac{1}{2}I(A_1:\cdots:A_n|E),
\end{equation}
with 
$I(A_1\!:\!\cdots\!:\!A_n|E) = \sum_{i=1}^n S(A_i|E) - S(A_1\ldots A_n|E)$
the conditional multi-information.
That is, $E_{\text{sq}}(\rho^{A_1\ldots A_n})$ would vanish iff 
$\rho$ is $n$-separable:
\[
  \rho^{A_1\ldots A_n} = \sum_\lambda p_\lambda \rho_{\lambda|1}^{A_1} \ox\cdots\ox \rho_{\lambda|n}^{A_n}.
\]
It seems that with the methods of~\cite{BCY,LiWinter} this cannot be approached.

The idea starts from the identity
\[\begin{split}
  I(A_1:\cdots:A_n|E)
        &= I(A_1:A_2\ldots A_n|E) + I(A_2:\cdots:A_n|E) \\
        &= \ldots = \sum_{i=1}^{n-1} I(A_i:A_{i+1}\ldots A_n|E),
\end{split}\]
showing that $I(A_1:\cdots:A_n|E) \leq 2\epsilon$ implies,
for all $i$, $I(A_i:A_{[n]\setminus i}|E) \leq 2\epsilon$, and more
generally, for all subsets $I\subset [n]$, $I(A_I:A_{[n]\setminus I}|E) \leq 2\epsilon$.

In particular, if $\epsilon = 0$, we can use the structure theorem of
\cite{SSA-eq} to find, for each $i$, a projective measurement 
$\bigl( P^{(i)}_{\lambda_i} \bigr)$ on $E$ that commutes with $\rho^{A_1\ldots A_n E}$,
such that for all $\lambda_i$,
\[
  \tr_E \rho^{A_1\ldots A_n E} P^{(i)}_{\lambda_i} 
      = p_{\lambda_i} \sigma_{\lambda_i}^{A_i} \ox \tau_{\lambda_i}^{A_{[n]\setminus i}}, 
\]
i.e., conditioned on the measurement outcomes $\lambda_i$, $A_i$ and $A_{[n]\setminus i}$
are in a product state.
Performing all these measurements in some fixed order, we thus obtain outcomes
$\lambda = \lambda_1\ldots\lambda_n$ such that conditioned on $\lambda$, the
state is a product state with respect to all partitions $i:[n]\setminus i$, 
which means that conditioned on $\lambda$, $A_1,\ldots,A_n$ factorize.

\medskip
We would like to use the machinery of the recovery maps to 
extract from $E$ a large number $k$ of
approximate copies of each $A_i$, using approximate recovery maps
$\widetilde{R}_i:\mathcal{L}(E) \longrightarrow \mathcal{L}(EA_i)$
according to Eq.~(\ref{eq:FR}).
With $t = \sqrt{8\ln 2}\sqrt{\epsilon}$ and tracing out
$E$, we can indeed get a state $\Omega^{A_1 A_2^{[k]}\ldots A_n^{[k]}}$,
with $A_i^{[k]} = A_i^1\ldots A_i^k$ consisting of $k$ copies
of $A_i$, such that
\[
  \| \rho^{A_1\ldots A_n} - \Omega^{A_1 A_2^{j_2}\ldots A_n^{j_n}} \|_1
                     \leq (n-1)(k-1)t \leq nk\sqrt{8\ln 2}\sqrt{\epsilon},
\]
for all tuples $(j_2,\ldots,j_n)$ such that all but at most one $j_i$ equals $1$.

We cannot say easily that this holds for all tuples $(j_2,\ldots,j_n)$,
because the different recover maps may interfere with each other.
However, if we could conclude that, we would be done: by symmetrizing
the $k$ copies of each $A_i$ ($i>1$) we would find, as before,
that $\rho$ is $O(\sqrt[4]{\epsilon})$-close to a $k$-extendible 
state, with $k=\Omega\left(\frac{1}{\sqrt[4]{\epsilon}}\right)$.

We could then again use the results of~\cite{NOP}, now extended to the multi-partite
case, to see that $\Omega^{A_1 A_2^{j_2}\ldots A_n^{j_n}}$ is at trace distance
at most $\frac{2}{k}(|A_2|^2+\cdots+|A_n|^2)$ from a fully separable (i.e.~$n$-separable)
state. Note that a reasoning along these lines goes through for the
-- generally larger -- multi-party conditional entanglement of mutual information 
(CEMI)~\cite{YangHorodeckiWang,Yang-et-al} 
\[
  E_{\text{I}}(\rho^{A_1\ldots A_n}) 
              = \inf_{\rho^{A_1 A_1'\ldots A_n A_n'}} 
                  \frac{1}{2}\bigl[ I(A_1A_1':\cdots:A_n A_n') - I(A_1':\cdots:A_n') \bigr],
\]
as since
shown by Wilde~\cite{Wilde-CEMI}. We have to leave the problem of
finding an extension of Theorem~\ref{thm:main} to $n>2$ parties to the 
attention of the interested reader.

\end{document}